\documentclass[12pt]{amsart}
\usepackage[utf8]{inputenc}
\usepackage[T1]{fontenc}
\usepackage{amssymb}
 \usepackage{amsaddr}
\usepackage{graphics}
\usepackage{xcolor}
\usepackage{enumerate}
\usepackage[pagebackref, colorlinks = true, linkcolor = teal, urlcolor  = teal, citecolor = violet]{hyperref}
\usepackage[margin=1in]{geometry}
\usepackage{cleveref}
\crefname{equation}{Eq.}{Eqs.}
\usepackage{bbm}

\newtheorem{theorem}{Theorem}
\newtheorem*{theorem*}{Theorem}

\newtheorem{corollary}[theorem]{Corollary}
\newtheorem*{corollary*}{Corollary}

\newtheorem{remark}[theorem]{Remark}
\newtheorem{example}[theorem]{Example}

\newtheorem{condition}{Condition}

\newcommand{\id}{{\rm Id}}

\newcommand{\tr}{{\operatorname{tr}}}

\begin{document}

\title{Commuting Kraus operators are normal}

\author{Martin Fraas}
\address{Department of Mathematics, University of California, Davis, CA 95616, USA}
\email{mfraas@ucdavis.edu}

\begin{abstract}
Let $\{V_1, \dots, V_n \}$ be a set of mutually commuting matrices. We show that if $V_1^* V_1 + \cdots +V_n^* V_n = \id$  then the matrices are normal and, in particular, simultaneously diagonalizable.
\end{abstract}

\date{\today}

\maketitle

\section{Result}

Let $\mathcal{H}$ be a finite dimensional complex Hilbert space. A matrix $A$ is called normal if it commutes with its Hermitian conjugate matrix $A^*$. A set of mutually commuting normal matrices is simultaneously diagonalizable, and there exists an orthonormal basis whose elements are joint eigenvectors of all the matrices. The condition that the matrices are normal is necessary, if two matrices $A,B$ commute, $[A,B]=0$, then without further assumptions they only need to have one common eigenvector \cite{wikic}. We provide a criteria that implies this condition.

\begin{theorem}
\label{thm}
Let $\{V_1, \dots, V_n\}$ be a set of commuting matrices such that $\sum_{\alpha=1}^n V_\alpha^* V_\alpha =\id$. Then the matrices are normal and, in particular, simultaneously diagonalizable.
\end{theorem}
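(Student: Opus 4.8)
The plan is to conjugate the whole family into simultaneous Schur form and then run a global trace accounting, rather than to chase individual matrix entries. First I would use that commuting matrices are simultaneously unitarily triangularizable: fix an orthonormal basis of $\mathcal{H}$ (with $d:=\dim\mathcal{H}$) in which every $V_\alpha$ is upper triangular, and split $V_\alpha=D_\alpha+N_\alpha$ with $D_\alpha=\mathrm{diag}(\lambda_\alpha^{(1)},\dots,\lambda_\alpha^{(d)})$ diagonal and $N_\alpha$ strictly upper triangular. Since a matrix is normal precisely when its Schur form is diagonal, and a family of mutually diagonal matrices is simultaneously diagonalized, the entire theorem reduces to showing $N_\alpha=0$ for all $\alpha$. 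Taking the trace of the hypothesis gives $\sum_\alpha\tr(V_\alpha^*V_\alpha)=\tr\id=d$; because the diagonal--nilpotent cross terms $D_\alpha^*N_\alpha$ and $N_\alpha^*D_\alpha$ are strictly triangular and hence traceless, this becomes $\sum_{\alpha,k}|\lambda_\alpha^{(k)}|^2+\sum_\alpha\tr(N_\alpha^*N_\alpha)=d$. As each $\tr(N_\alpha^*N_\alpha)\ge 0$, it now suffices to prove the single scalar identity $\sum_{\alpha,k}|\lambda_\alpha^{(k)}|^2=d$.

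The heart of the argument is a local normalization of the joint spectrum. I would group the diagonal positions by the distinct joint eigenvalue tuples $(\mu_1^{(s)},\dots,\mu_n^{(s)})$, each occurring with multiplicity $m_s=\dim\mathcal{H}_s$ equal to the dimension of the corresponding joint generalized eigenspace, so that $\sum_s m_s=d$. The claim to establish is $\sum_\alpha|\mu_\alpha^{(s)}|^2=1$ for every $s$. To get this, note that on $\mathcal{H}_s$ the operators $V_\alpha-\mu_\alpha^{(s)}$ are commuting nilpotents and therefore share a kernel vector, i.e. there is a genuine common eigenvector $v\neq 0$ with $V_\alpha v=\mu_\alpha^{(s)}v$ for all $\alpha$. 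Evaluating the hypothesis on this one vector yields the clean identity $\sum_\alpha|\mu_\alpha^{(s)}|^2\,\langle v,v\rangle=\sum_\alpha\langle V_\alpha v,V_\alpha v\rangle=\langle v,(\sum_\alpha V_\alpha^*V_\alpha)v\rangle=\langle v,v\rangle$, so $\sum_\alpha|\mu_\alpha^{(s)}|^2=1$.

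Combining the two steps would finish the proof: summing the local identity against the multiplicities, $\sum_{\alpha,k}|\lambda_\alpha^{(k)}|^2=\sum_s m_s\sum_\alpha|\mu_\alpha^{(s)}|^2=\sum_s m_s=d$, which forces every $\tr(N_\alpha^*N_\alpha)=0$ and hence $N_\alpha=0$. Thus each $V_\alpha$ is diagonal in the chosen orthonormal basis, so the family is normal and simultaneously diagonalized.

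The step I expect to be the main obstacle --- and the reason I route everything through the trace --- is that the joint generalized eigenspaces $\mathcal{H}_s$ are invariant but in general \emph{not} mutually orthogonal, so one cannot simply restrict $\sum_\alpha V_\alpha^*V_\alpha=\id$ to a single $\mathcal{H}_s$ and argue there. The global accounting sidesteps this: I only need the existence of one honest common eigenvector per tuple to pin each tuple-norm to $1$, together with the trace count, and never have to show directly that off-diagonal Schur entries vanish or that some eigenvector is reducing. A direct entry-by-entry attack using the commutator relations does succeed in small dimensions but seems hard to organize in general, which is exactly what this detour is meant to avoid.
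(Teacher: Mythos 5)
Your proposal is correct, but it takes a genuinely different and considerably more elementary route than the paper. The paper works with the completely positive map $\Phi(X)=\sum_\alpha V_\alpha^* X V_\alpha$ and imports structure theory of quantum channels: the decomposition $\mathcal{H}=\mathcal{H}_F\oplus\mathcal{H}_D$ of Eq.~(\ref{eq:dec}), a faithful stationary state $\rho$ on $\mathcal{H}_F$ (traced against $\sum_\alpha|[A_\beta^*,A_\alpha]|^2$ to force normality of the diagonal blocks $A_\alpha$), and the strict contractivity of $\Phi_D$ together with the fixed-point equation $\Phi(X)=X$ to kill the off-diagonal blocks $B_\alpha$. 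You avoid channel theory entirely: simultaneous unitary triangularization reduces normality to the vanishing of the nilpotent parts $N_\alpha$, and a trace count reduces that to the scalar identity $\sum_{\alpha,k}|\lambda_\alpha^{(k)}|^2=d$, which you obtain by evaluating the normalization on one common eigenvector per joint eigenvalue tuple. Your argument is self-contained linear algebra and exposes the theorem as a saturation statement for the joint point spectrum; the paper's machinery is heavier but explains structurally why finite dimensionality is essential (for the unilateral shift the point spectrum is empty, so both your common-eigenvector step and the trace count have no analogue), and it ties the result to the quantum-trajectory setting that motivates it.

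One step in your write-up is asserted rather than proved, and a careful reader would ask for it: that in an \emph{arbitrary} simultaneous Schur basis the diagonal tuples $(\lambda_1^{(k)},\dots,\lambda_n^{(k)})$, counted over $k$, are exactly the joint eigenvalue tuples $(\mu_1^{(s)},\dots,\mu_n^{(s)})$ with multiplicities $m_s=\dim\mathcal{H}_s$. This is true, but it does not follow from the single-matrix fact that the Schur diagonal lists the eigenvalues. Two quick fixes: (a) compute the characteristic polynomial of $\sum_\alpha c_\alpha V_\alpha$ once from the triangular form, as $\prod_k\bigl(z-\sum_\alpha c_\alpha\lambda_\alpha^{(k)}\bigr)$, and once from the joint generalized eigenspace decomposition, as $\prod_s\bigl(z-\sum_\alpha c_\alpha\mu_\alpha^{(s)}\bigr)^{m_s}$; since these agree for every $c\in\cc^n$, a finite-union-of-hyperplanes argument yields the multiset equality of tuples; or (b) build the Schur basis adapted to the invariant flag $\mathcal{H}_{s_1}\subset\mathcal{H}_{s_1}\oplus\mathcal{H}_{s_2}\subset\cdots$, in which case the diagonal entries in the $j$-th block are manifestly $\mu_\alpha^{(s_j)}$. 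Note also that you need less than the full multiset statement: it suffices that every diagonal tuple equals \emph{some} joint eigenvalue tuple, since then each index $k$ contributes exactly $1$ to the trace count. With either patch your proof is complete.
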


The motivation for the theorem comes from the theory of quantum trajectories. In this context, a set of matrices satisfying the normalization condition
\begin{equation}
\label{eq:normalization}
\sum_{\alpha=1}^n V_\alpha^* V_\alpha = \id
\end{equation}
are called Kraus operators or jump operators. The operators describe evolution of an initial state $\psi \in \mathcal{H}$ conditioned on outcomes $\alpha_1, \dots, \alpha_n$ of a repeated measurement, and the probability of obtaining these outcomes. The probability is given by
\begin{equation}
\label{eq:P}
\mathbb{P}(\alpha_1, \dots, \alpha_n) := \| V_{\alpha_n} \dots V_{\alpha_1} \psi \|^2,
\end{equation}
which is a probability measure provided that $\|\psi \| = 1$. Following the experiments of the quantum optics group of S.~Haroche \cite{Haroche}, the case of non-demolition measurements attracted a lot of attention in physics and math literature \cite{BaBe, BaBeBe1, BaBeBe2, BFFS}. A standard definition of a non-demolition measurement postulates which observable $N$ is not being demolished.
\begin{condition}\label{defA} A set of jump operators $V_\alpha$ is called non-demolition if there exists a Hermitian operator $N$ and complex functions $f_\alpha$ such that $V_\alpha = f_\alpha(N)$.
\end{condition}
In particular, in the non-demolition case the jump operators commute, and the probability measure (\ref{eq:P}) is exchangeable. This suggest a natural intrinsic definition of the non-demolition case.
\begin{condition}\label{defB} A set of jump operators $V_\alpha$ is called non-demolition if the operators $V_\alpha$ are mutually commuting.
\end{condition}
An immediate corollary of Theorem~\ref{thm} is that these two definitions give the same class of Kraus operators.
\begin{corollary}
Conditions \ref{defA} and \ref{defB} are equivalent.
\end{corollary}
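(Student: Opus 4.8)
The plan is to establish the two implications separately and to invoke Theorem~\ref{thm} only where it is genuinely needed; throughout I regard both conditions as properties of a fixed family of jump operators, so that the normalization $\sum_\alpha V_\alpha^* V_\alpha = \id$ is available in either case.

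For Condition~\ref{defA} $\Rightarrow$ Condition~\ref{defB} I would argue by the spectral theorem alone. If $V_\alpha = f_\alpha(N)$ for a single Hermitian $N$, write $N = \sum_k \mu_k P_k$ with distinct real eigenvalues $\mu_k$ and orthogonal spectral projections $P_k$. Then $V_\alpha = \sum_k f_\alpha(\mu_k) P_k$ for every $\alpha$, and since the $P_k$ are mutually commuting (indeed $P_k P_l = \delta_{kl} P_k$), any two operators $V_\alpha$ and $V_\beta$ commute. This gives Condition~\ref{defB} and uses nothing beyond functional calculus for a normal matrix.

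The content is in the reverse implication Condition~\ref{defB} $\Rightarrow$ Condition~\ref{defA}, and this is exactly where Theorem~\ref{thm} does the work. Assuming the $V_\alpha$ commute, the theorem supplies an orthonormal basis $e_1, \dots, e_d$ of $\mathcal H$ (here $d = \dim \mathcal H$) and scalars $\lambda_\alpha^{(i)} \in \cc$ with $V_\alpha e_i = \lambda_\alpha^{(i)} e_i$ for all $\alpha$ and $i$; equivalently $V_\alpha = \sum_i \lambda_\alpha^{(i)}\, e_i e_i^*$. To produce a single $N$ and functions $f_\alpha$, I would choose pairwise distinct real numbers $\nu_1, \dots, \nu_d$ and set $N = \sum_i \nu_i\, e_i e_i^*$, a Hermitian matrix with simple spectrum sharing the eigenbasis of the $V_\alpha$. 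For each $\alpha$ I would then take $f_\alpha$ to be the (Lagrange) interpolating polynomial determined by $f_\alpha(\nu_i) = \lambda_\alpha^{(i)}$, which is well defined precisely because the $\nu_i$ are distinct. Since $f_\alpha(N) = \sum_i f_\alpha(\nu_i)\, e_i e_i^* = \sum_i \lambda_\alpha^{(i)}\, e_i e_i^* = V_\alpha$, this yields Condition~\ref{defA}.

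I expect the only real obstacle to be conceptual rather than computational: everything rests on having a \emph{common} orthonormal eigenbasis, which is not available from commutativity by itself and is supplied only by the normality conclusion of Theorem~\ref{thm}. Granting that, the single delicate point in the construction is that $N$ must separate the basis vectors, for otherwise the prescription $\nu_i \mapsto \lambda_\alpha^{(i)}$ could fail to define a function; taking the $\nu_i$ pairwise distinct (a simple spectrum for $N$) removes this issue immediately, and the rest is routine.
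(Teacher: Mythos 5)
Your proof is correct and follows essentially the same route as the paper: the forward direction by functional calculus, and the reverse direction by invoking Theorem~\ref{thm} to obtain a common orthonormal eigenbasis and then constructing $N$ with the $V_\alpha$ as functions of it. The only difference is that you spell out the construction (simple spectrum for $N$ plus Lagrange interpolation) that the paper leaves implicit in its ``hence there exists a Hermitian matrix $N$'' step.
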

\begin{proof}
As discussed above Condition \ref{defA} implies Condition \ref{defB}. In the opposite direction, if Kraus operators $V_\alpha$ are mutually commuting then by Theorem~\ref{thm} they are normal and hence there exists a Hermitian matrix $N$ such that all matrices $V_\alpha$ are functions of $N$. 
\end{proof}

\section{Completely positive maps}
The proof of the theorem that we give below involves some results about completely positive maps. We recall these results.  A completely positive map $\Phi : B(\mathcal{H}) \to B(\mathcal{H})$ has a form
$$
\Phi(X) = \sum_\alpha V_\alpha^* X V_\alpha,
$$
for some finite set of operators $V_\alpha$ (that do not necessarily commute). We will also use the dual completely positive map
$$
\Phi^*(A) = \sum_\alpha V_\alpha A V_\alpha^*.
$$

 The first fact, see \cite{wolftour}, about completely positive maps that we will need is that 
\begin{equation}
\label{eq:norm}
\| \Phi \| = \| \sum_\alpha V_\alpha^* V_\alpha \|,
\end{equation}
where both norms are the operator norms. If $\Phi$ satisfies the normalization condition Eq.~(\ref{eq:normalization}) then $\|\Phi\|=1$ belongs to the spectrum of $\Phi$. Furthermore, there exists an positive eigen-matrix $\rho$ such that $\Phi^* (\rho) = \rho$. We will call such eigen-matrix a stationary state. 

We now assume the normalization condition Eq.~(\ref{eq:normalization}). The second fact, see \cite{CT2015} for items (i), (ii) and \cite[Proposition 3]{Albert2019} for (iii), that we will use is that there exists a decomposition of the Hilbert space $\mathcal{H} = \mathcal{H}_F \oplus \mathcal{H}_D$ such that with respect to this decomposition the matrices have a form
\begin{equation}
\label{eq:dec}
V_\alpha = \left( \begin{array}{cc}
				A_\alpha & B_\alpha \\
				0 & C_\alpha
			\end{array} \right),
\end{equation}
and 
\begin{enumerate}[(i)]
\item $\Phi^*$ restricted to $B(\mathcal{H}_F)$,
$$
\Phi^*_F (X) = \sum_\alpha A_\alpha X A_\alpha^*,
$$
possesses a faithful (full rank) stationary state,
\item $\Phi$ restricted to $B(\mathcal{H}_D)$,
$$
\Phi_D (X) = \sum_\alpha C_\alpha^* X C_\alpha,
$$
has a spectral radius strictly less than 1, 

\item Any solution of $\Phi(X) = X$ is diagonal with respect to the decomposition, i.e. it has a form
$$
X = \left( \begin{array}{cc}
				X_F & 0 \\
				0 & X_D
			\end{array} \right).
$$
\end{enumerate}
If the decomposition is trivial, $\mathcal{H}_D = 0$, we define $B_\alpha = C_\alpha =0$. The decomposition is trivial  if and only if $B_\alpha \equiv 0$.

We will not need the full claim (iii). We will use a claim that any solution of $\Phi(X) = X$ that acts as zero on $\mathcal{H}_F$ is a zero matrix. This also follows from (ii). By (ii), subspace $\mathcal{H}_D$ is transient, i.e. for any matrix $A$,
$$
\lim_{n \to \infty} P_D {\Phi^*}^n(A) P_D = 0,
$$
where $P_D$ is the projection on $\mathcal{H}_D$. By positivity of $\Phi$, also $P_D {\Phi^*}^n(A)$ and ${\Phi^*}^n(A) P_D$ go to zero. Since $\tr(XA) = \tr(X {\Phi^*}^n (A))$ this implies that $\tr(X A) = 0$ so $X$ is indeed zero.

\section{Proof of the theorem}
We use the decomposition Eq.~(\ref{eq:dec}). The normalization Eq.~(\ref{eq:normalization}) is equivalent to
\begin{align}
\sum_{\alpha} A_\alpha^* A_\alpha &= \id, \label{n1} \\
\sum_\alpha  A_\alpha^* B_\alpha &= 0 ,\label{n2} \\
\sum_\alpha B_\alpha^* B_\alpha + C_\alpha^* C_\alpha &= \id. \label{n3}
\end{align}
The commutation assumption is equivalent to equations
\begin{align}
A_\alpha A_\beta &=  A_\beta A_\alpha, \label{c1} \\
A_\alpha  B_\beta + B_\alpha C_\beta &= A_\beta B_\alpha  + B_\beta C_\alpha \label{c2}, \\
C_\alpha C_\beta &=C_\beta C_\alpha \label{c3},
\end{align}
holding for all $\alpha, \beta$. Eqs.(\ref{n1}), (\ref{c1}) imply that the matrices $A_1, \dots ,A_n$ are mutually commuting and normalized as in Eq.~(\ref{eq:normalization}). By the property of the decomposition,
the corresponding map $\Phi_F^*$ has a faithful (full rank) stationary solution $\rho$, i.e. a matrix such that $\Phi_F^*(\rho) = \rho$. Let $\Phi_F$ be the dual map,
$$
\Phi_F(X) = \sum_{\alpha} A_\alpha^* X A_\alpha.
$$
For any $\beta$, we then have 
$$
\Phi_F(A_\beta A_\beta^*) - A_\beta \Phi_F(A_\beta^*) -  \Phi_F(A_\beta) A_\beta^*  +A_\beta A_\beta^* = \sum_\alpha |[A^*_\beta, A_\alpha]|^2. 
$$
Since the matrices commute we have that $\Phi_F(A_\beta) = A_\beta$ and $\Phi_F(A_\beta^*) = A_\beta^*$. Hence we obtain 
$$
\sum_\alpha \tr \left( \rho|[A^*_\beta, A_\alpha]|^2\right) =0,
$$
and it follows that the matrices $A_\alpha$ are normal.

We now show that $B_\alpha=0$ which implies that $\mathcal{H}_D = 0$ and finishes the proof. Multiplying Eq.~(\ref{c2}) from the left by $A_\alpha^*$ and summing over $\alpha$ we get
$$
B_\beta = \sum_\alpha A_\alpha^* B_\beta C_\alpha,
$$
where we used that $[A_\beta^*,A_\alpha]=0$ and Eqs.~(\ref{n2}), (\ref{n1}).  The equation implies that
$$
\Phi \left( \begin{array}{cc}
				0 & B_\beta \\
				0 & Y
			\end{array} \right) = \left( \begin{array}{cc}
				0 & B_\beta \\
				0 & \Phi_D(Y) + \sum_\alpha B_\alpha^* B_\beta B_\alpha 
			\end{array} \right).
$$
Since $\Phi_D$ has spectral radius less than $1$, the sum 
$$
X_D : = \sum_{n=0}^\infty \Phi_D^n(\sum_\alpha B_\alpha^* B_\beta B_\alpha)
$$
is convergent. Hence the matrix
$$
X = \left( \begin{array}{cc}
				0 & B_\beta \\
				0 & X_D
			\end{array} \right)
$$
satisfies the equation $\Phi X = X$. Any solution of the equation has to be diagonal so we got $B_\beta = 0$ as announced. This finishes the proof.

\begin{remark}
\label{remark}
A related result is given in \cite{BFFS}. Assuming that $V_\alpha$'s are commuting and that the decomposition Eq.~(\ref{eq:dec}) is trivial, \cite{BFFS} shows that the quantum trajectory purifies on the joint spectral decomposition of $A_\alpha^* A_\alpha$. For the notion of purification on the spectrum we refer the reader to the aforementioned article, and only note that this, in particular, means that the spectral measure of $\psi$ associated to the joint spectral decomposition gives the de Finetti decomposition of the measure Eq.~(\ref{eq:P}).  For an example, see Example~\ref{example}.
\end{remark}

\section{Discussion}
The assumption that the Hilbert space is finite dimensional cannot be relaxed. Any partial isometry gives a counterexample. For example, the right shift $R$ on $L^2(\mathbb{N})$,
$$
R (x_1, x_2 ,x_3, \dots) = (0, x_1, x_2, x_3, \dots), 
$$
satisfies the normalization condition $R^* R =\id$ (and obviously $[R,R]=0$)  but $R$ is not normal. The reason for the failure of the theorem is that the decomposition Eq.~(\ref{eq:dec}) is more complicated in infinite dimensions, see \cite{CarbonePautrat}, there are invariant subspaces that do not posses any stationary state.

The theory of non-demolition quantum trajectories in infinite dimensional Hilbert spaces assuming the definition given by Condition~\ref{defA} was developed in \cite{BCFF}. It would be interesting to understand the behavior of measure Eq.~(\ref{eq:P}) under Condition~\ref{defB}. We illustrate the difference on two examples.

\begin{example}
\label{example}
Consider the space $L^{2}(\mathbb{Z})$ and jump operators
$$
V_1 = \frac{1}{2}(1+R), \quad V_2 = \frac{1}{2} (1 - R),
$$
where $R$ is the right shift operator on $\mathbb{Z}$. Then $V_1, V_2$ are mutually commuting and normal. Let $\psi \in L^{2}(\mathbb{Z})$ and let $\hat{\psi} \in L^2[0,2 \pi]$ be it's Fourrier transform, i.e.
$$
\psi_n = \frac{1}{\sqrt{2 \pi}} \int_0^{2 \pi} e^{i n k} \hat{\psi}(k) dk.
$$
\end{example}
Then for the measure Eq.~(\ref{eq:P}) we get 
\begin{align*}
\mathbb{P}(\alpha_1, \dots, \alpha_n) &=  \int_0^{2 \pi} |\frac{1 + e^{-ik}}{2}|^{2 n_1} |\frac{1 - e^{-ik}}{2}|^{2 n_2} |\hat{\psi}(k)|^2 dk \\
							 &=  2^{-n}\int_0^{2 \pi} (1 + \cos k)^{n_1} (1- \cos k)^{n_2} |\hat{\psi}(k)|^2 dk
\end{align*}
where $n_1, n_2$ are the number of ones resp. twos in the sequence $\alpha_1, \dots, \alpha_n$. 
The i.i.d. measures under the integral are the same for $k$ and $\pi - k$ so that de Finetti decomposition of the measure is
$$
\mathbb{P}(\alpha_1, \dots, \alpha_n) =  2^{-n}\int_{\pi/2}^{3 \pi/2} (1 + \cos k)^{n_1} (1- \cos k)^{n_2} (|\hat{\psi}(k)|^2 + |\hat{\psi}(\pi - k)|^2) dk.
$$
The measure $(|\hat{\psi}(k)|^2 + |\hat{\psi}(\pi - k)|^2) dk$ in the decomposition is the spectral measure of $\psi$ with respect to the discrete Laplacian $R + R^*$. This demonstrates Remark~\ref{remark}, it is a general feature that assuming Condition~\ref{defA}, the measure in the de Finetti decomposition is the spectral measure associated with $N$, see \cite{BCFF}.

This example should be contrasted with
\begin{example}
Consider the space $L^{2}(\mathbb{N})$ and jump operators
$$
V_1 = \frac{1}{2}(1+R), \quad V_2 = \frac{1}{2} (1 - R),
$$
where $R$ is the right shift operator on $\mathbb{N}$. Then $V_1, V_2$ are mutually commuting but not normal. The action of $V_\alpha$ on any $\psi$ is the same as if $V_\alpha$ was defined on the whole $\mathbb{Z}$ and $\psi$ was extended by zero on the negative numbers. Hence de Finetti decomposition given in the previous example holds true also here. However, the measure is not anymore associated to the spectral measure of $R +R^*$ on $L^2(\mathbb{N})$.
\end{example}

\section*{Acknowledgements} The author thanks Tristan Benoist for comments on the draft version of the article. The research has been supported by ANR project QTraj (ANR-20-CE40-0024-01) of the French National Research Agency (ANR).

\bibliography{QTraj}
\bibliographystyle{abbrv}
\end{document}